\documentclass[11pt,draftcls,onecolumn] {IEEEtran} 

\usepackage{amsmath,amssymb,graphicx,epsfig,epstopdf,fancyhdr,amsthm,tabulary}
\usepackage{cite}
\usepackage{multirow}
\usepackage{float}
\usepackage{amsfonts}
\usepackage{footnote}
\makesavenoteenv{tabular}
\usepackage{tabulary,bigstrut,array,multirow,booktabs}%
\usepackage{tikz,subfigure}
\usetikzlibrary{backgrounds,automata}
\usetikzlibrary{calc}
\usepackage{pdfsync}
\usepackage{bm}
\usepackage{latexsym}
\usepackage{flushend}



\newtheorem{lem}{Lemma}

\theoremstyle{remark}

\theoremstyle{definition}



\newcommand{\tr}{\mathop{\mathrm{tr}}}
\newcommand{\bfsub}[2]{\mbox{\boldmath$#1$}$_{#2}$}
\newcommand{\bfsubp}[3]{\mbox{\boldmath$#1$}$_{#2}^{#3}$}
\newcommand{\mybf}[1]{\mbox{\boldmath$#1$}}


\begin{document}

\title{Wyner-Ziv Coding in the Real Field Based on BCH-DFT Codes}

\author{\IEEEauthorblockN{Mojtaba Vaezi, {\it Student Member, IEEE,}  and Fabrice Labeau, {\it Senior Member, IEEE}}\\
}


\maketitle

\begin{abstract}

We show how {\it real-number codes} can be used to compress correlated
sources and establish a new framework for {\it distributed lossy source
coding}, in which we quantize compressed sources
instead of compressing quantized sources. This
change in the order of {\it binning} and {\it quantization} blocks makes it
possible to model correlation between continuous-valued sources
more realistically and compensate for the quantization error when the sources
are completely correlated. We focus on the asymmetric case, i.e.,
lossy source coding with side information at the decoder, also known as {\it Wyner-Ziv coding}.
 The encoding and decoding procedures
are described in detail for discrete Fourier transform (DFT)
codes, both for {\it syndrome-} and {\it parity-based} approaches.
We also extend the parity-based approach to the case where the transmission channel is noisy
and perform distributed {\it joint source-channel coding} in this context.
The proposed system is well suited for {\it low-delay} communications.
Furthermore, the mean-squared reconstruction error (MSE)
is shown to be less than or close to the quantization error level,
the ideal case in coding based on binary codes.

\end{abstract}

\begin{keywords}
Wyner-Ziv coding, distributed source coding, joint source-channel coding, real-number codes,
BCH-DFT codes, syndrome, parity, low-delay.
\end{keywords}

\IEEEpeerreviewmaketitle

\section{Introduction}
{\let\thefootnote\relax\footnotetext{
This work was supported by Hydro-Qu\'{e}bec, the Natural Sciences and Engineering
Research Council of Canada and McGill University in the framework of the
NSERC/Hydro-Qu\'{e}bec/McGill Industrial Research Chair in Interactive
Information Infrastructure for the Power Grid.
Part of the material in this paper was presented at the
IEEE 76th Vehicular Technology Conference (VTC2012-Fall), Qu\'{e}bec city, Canada, September 2012 \cite{Vaezi2011DSC}. 	

The authors are with the Department of Electrical and Computer
Engineering, McGill University, Montreal, QC H3A 0A9, Canada (e-mail: 	
mojtaba.vaezi@mail.mcgill.ca; fabrice.labeau@mcgill.ca). 	
}}

\IEEEPARstart{T}{he} distributed source coding (DSC) studies compression of statistically dependent
sources which do not communicate with each other \cite{SW}.
The Wyner-Ziv coding problem \cite{WZ}, a special case of {\it lossy} DSC, considers lossy
data compression with side information at the decoder.
The current approach to the DSC of continuous-valued sources is to first convert them
 to discrete-valued sources and then apply {\it lossless} (Slepian-Wolf) coder \cite{Girod,Xiong,Pradhan2}.
 Similarly, a practical Wyner-Ziv encoder consists of a quantizer and Slepian-Wolf encoder. 
There are, hence,  {\it quantization} and {\it binning} losses for the source coder.
Despite this, rate-distortion theory promises that block codes of sufficiently
 large length are asymptotically optimal,
and they can be seen as vector quantizers followed by fixed-length coders \cite{dragotti2009distributed}.
Therefore, practical {\it Slepian-Wolf} coders have been realized using different binary channel codes,
e.g.,  LDPC and turbo codes \cite{liveris2002compression,bajcsy2001coding,aaron2002compression}.
These codes, however, are out of the question if low
delay is imposed on the system as they may introduce excessive delay when the desired probability
of error is very low.


In this paper, we establish a new framework for the Wyner-Ziv coding
and distributed lossy source coding, in general. 
We propose to first compress the continuous-valued sources and then
quantize them, as opposed to the conventional approach.
The new framework is compared against the existing one in Fig.~\ref{fig:realDSC}.
It introduces the use of {\it real-number codes} (see, e.g., 
\cite{Marshall,wolf1983redundancy,nair1990real,marvasti1999efficient,chen2005numerically}),
to represent correlated sources with fewer samples, in the real field.

To do compression, we
generate syndrome or parity samples of the input sequence using a real-number
channel code \cite{Marshall}, similar to what is done to compress a binary sequence
of data using binary channel codes. Then, we quantize these syndrome or parity
samples and transmit them. There are still coding (binning)
and quantization losses; however, since coding is performed before
quantization, error correction is in the real field, and the quantization
error can be corrected when two sources are completely correlated
over a block of code. A second and more important advantage of this
approach is the fact that the correlation channel model can be more
realistic, as it captures dependency between continuous-valued
sources rather than quantized sources.
In the conventional approach,
it is implicitly assumed that quantization of correlated signals
results in correlated sequences in the binary domain; this may
not necessarily be precise due to the nonlinearity of the quantization operation.
To avoid any loss due to the inaccuracy of correlation model, we
exploit the correlation between the continuous-valued sources before quantization.

\begin{figure}[!t]
  \centering
 \includegraphics [scale=1.2] {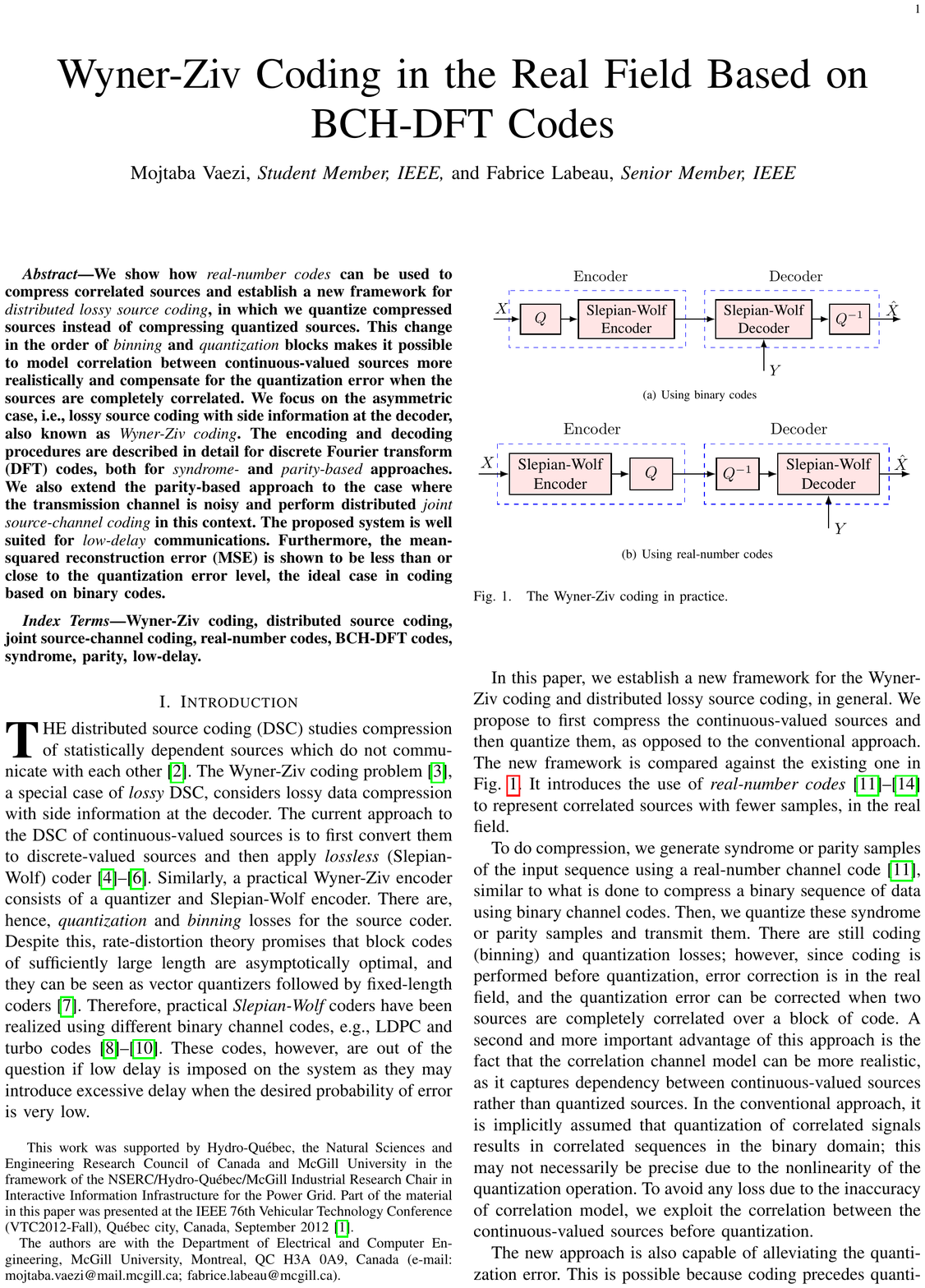}
  \caption{The Wyner-Ziv coding based on  binary and real-number codes. Both schemes can be simply extended to distributed source coding.}
\label{fig:realDSC}
\end{figure}

The new approach is also capable of alleviating the quantization error. This is possible because coding precedes quantization.
Specifically, we use the {\it Bose-Chaudhuri-Hocquenghem} (BCH) DFT codes for compression
\cite{Marshall,blahut2003algebraic,rath2004subspace,gabay2007joint,Takos,Vaezi2011LS}. Owing to
DFT codes, the loss due to quantization can be decreased by a
factor of $\frac{k}{n}$ for an $(n,k)$ code \cite{goyal2001quantized,rath2004frame,Vaezi2011LS}.
Additionally, reconstruction loss becomes zero if the two sources are perfectly correlated over one codeword.
This is achieved in view of modeling the correlation between
the two sources in the continuous domain.
DFT codes also can exploit the temporal correlation typically found in many
sources.

Finally, the proposed scheme is more suitable
for low-delay communications since, even by using short DFT codes, a reconstruction error less
than the quantization error is attainable.
Moreover, as another contribution of this paper,
we use a single DFT code both to compress and protect
sources against channel variations.
This extends the Wyner-Ziv coding to the case when errors occur during transmission
and proposes joint source-channel coding (JSCC) with side information at the decoder,
within the realm of real-number codes. This scheme directly maps
short source blocks into channel blocks, and thus it is well
suited to low-delay coding.  While the MSE performance of distributed JSCC systems with binary
codes is limited to the quantization error level, the proposed
scheme breaks through this limit.

The paper is organized as follows. In Section~\ref{sec:sys},
we motivate the new framework for lossy DSC. In
Section~\ref{sec:DFT}, we study the encoding and decoding of DFT codes, and we adapt
the subspace error localization to the Slepian-Wolf coder. Then in Section~\ref{sec:WZ}, we present the encoder
and decoder for the Wyner-Ziv coding based on DFT codes, both for the syndrome and parity approaches.
The system is extended to the noisy channel case in Section~\ref{sec:DJSCC}. Section~\ref{sec:sum}
presents the simulation results. Section~\ref{sec:con} provides our concluding remarks.

For notation, we use upper-case  and boldface lower-case letters for vectors, boldface upper-case letters for
matrices,  $(.)^T$ for transpose, $(.)^H$ for conjugate transpose,
$(.)^\ast$ for conjugate, and $\tr(.)$ for the trace.
The dimensions of matrices are indicated by one and two subscripts, when required.

\section{Real-Number Codes for DSC}\label{sec:sys}

\subsection{Motivations}
Similar to error correction in finite fields, the basic
idea of error correcting codes in the {\it real field} is to insert
redundancy to a message vector to convert it to
a longer vector, called the codeword.
Nevertheless, the insertion of
redundancy is done in the real field, i.e.,  before quantization
and entropy coding \cite{Marshall,wolf1983redundancy}.
One main advantage of \textit{soft redundancy} (real field codes)  over \textit{hard redundancy}
(binary field codes) is that by using soft redundancy one can go beyond the
quantization error level and thus reconstruct continuous-valued
signals more accurately.

We introduce the use of real-number codes in lossy compression of correlated signals.
The proposed system is depicted in Fig. \ref{fig:realDSC}. Although it
consists of the same blocks as the existing
Wyner-Ziv coding scheme \cite{Girod, Xiong}, the order of these blocks is changed here.
Therefore binning is performed before quantization;
we use DFT codes for this purpose \cite{Marshall,rath2004subspace}.
This change in the order of binning and quantization blocks
brings some advantages to lossy DSC, as described in the following:

\subsubsection{Realistic correlation model}
In the existing framework for lossy DSC, correlation between two sources is
modeled after quantization, i.e., in the binary domain. More precisely, correlation
between quantized sources is usually modeled as a binary symmetric channel (BSC), mostly
with known crossover probability \cite{Pradhan2,Girod,Xiong,aaron2002compression}.
Admittedly though, due to nonlinearity of the quantization operation,
correlation between the quantized signals is not known accurately even if it is known in the continuous domain.
This motivates us to investigate a method that exploits correlation between continuous-valued sources
to perform DSC.

\subsubsection{Alleviating the quantization error}
In lossy data compression with side information at the decoder, soft redundancy,
added by DFT codes, can be used to correct both quantization errors and (correlation)
channel errors. Thus, the loss due to quantization can be recovered,
at least partly if not wholly. In fact,
if the two sources are exactly the same over a codeword, the quantization error
can be compensated for. That is, perfect reconstruction is attainable over the corresponding samples.
The loss due to the quantization error is decreased by a factor of code rate even
if correlation is not perfect, i.e., when (correlation) channel errors exist.
This is because DFT codes are {\it tight} frames; hence, they minimize the MSE
\cite{goyal2001quantized,rath2004frame,Vaezi2011LS,casazza2003equal}.

\subsubsection{Low-delay communications}
If communication is subject to low-delay constraints,
we cannot use turbo or LDPC codes, as their performance is not
satisfactory for short code length.
low-delay coding has recently drawn a lot of attention; it is
done by mapping short source blocks into channel blocks \cite{akyol2010optimal,wernersson2009nonlinear}.
Whether low-delay requirement exists or not depends on the specific
applications. However, even in the applications that low-delay
transmission is not imperative, it is sometimes useful to consider
low-dimensional systems for their low computational complexity.

\subsection{Correlation Channel Model}\label{sec:model}
Accurate modeling of the correlation between the sources plays a crucial role in
the efficiency of the DSC systems. Most of the previous works assume that the correlation between continuous-valued sources
can be modeled by an i.i.d. BSC
 in the binary field. This is not, however, exact because of the nonlinearity of quantization operation.
Besides, independent errors in the sample domain are not translated to independent errors
in the bit domain \cite{Vaezi2012model}.
These issues can be avoided if we
exploit the correlation between the continuous-valued sources before quantization.

The correlation between
the analog sources $X$ and $Y$, in general, can be defined by
\begin{align}
Y&=X+E, \label{eq:corE1}
\end{align}
where $E$ is a real-valued random variable.
Particularly, the above model represents
some well-known models motivated in video coding and sensor networks. Let
\begin{align} \label{eq:corE2}
E &  \sim \begin{cases}
{\cal N} (0,\sigma_e^2) \qquad\quad\quad\quad  \text{ w.p. } \quad q_1,\\
{\cal N} (0,\sigma_e^2+\sigma_i^2)\quad\quad  \,\,\,\,\, \text{ w.p. } \quad q_2,\\
 0 \qquad \quad\quad \quad\qquad\quad \;\, \text{ w.p. } \quad 1-q_1-q_2,\\
\end{cases}
\end{align}
in which $\sigma_i^2 \gg \sigma_e^2$ and $q_1+q_2 \leq 1$. Then,  for $q_1=1$ or $q_2=1$
the Gaussian correlation is obtained, which is broadly used in the sensor networks literature.
Further, for $q_1+q_2=1$
 the Gaussian-Bernoulli-Gaussian (GBG)  and for $q_1+q_2<1$, $q_1q_2=0$ the Gaussian-Erasure
(GE) models are realized. The latter two models are
 more suitable for video applications \cite{bassi2008source}.

\section{BCH-DFT CODES: Construction and Decoding}
\label{sec:DFT}

In this section, we study a class of real-number codes that are employed for binning throughout this paper,
investigate some properties of their syndrome, and adapt their decoding algorithm to the Slepian-Wolf coding.
These codes are a family of Bose-Chaudhuri-Hocquenghem (BCH)
codes in the real field whose parity-check matrix $\bm{H}$ and Generator matrix $\bm{G}$ are
defined based on the DFT matrix; they are, thus, known as BCH-DFT codes, or simply DFT codes.

BCH-DFT codes\cite{Marshall} are linear block codes over the {\it real} or {\it complex} fields.
Similar to other BCH codes, the spectrum of any codeword is zero in a block of $d\triangleq n-k$ cyclically adjacent
components, where $d+1$ is the designed distance of that code \cite{blahut2003algebraic}.
The error correction capability of the code is, hence, given by $t=\lfloor \frac{d}{2}\rfloor$.
In this paper, we only consider real BCH-DFT codes, i.e., the BCH-DFT codes whose generator matrix has real entries only.

\subsection{Encoding}\label{sec:Enc}

An $(n, k)$ real BCH-DFT code is defined by its generator and parity-check matrices.
The generator matrix is given by
\begin{align}
\bm{G}= \sqrt{\frac{n}{k}} \bm{W}_n^H \bm{\Sigma} \bm{W}_k,
\label{eq:G1}
\end{align}

\noindent in which $\bm{W}_k$ and $\bm{W}_n^H$ respectively are the DFT and
IDFT matrices of size $k$ and $n$, and
\mbox{\boldmath$\Sigma$} is an $n \times k$ matrix  defined as
\begin{align}
\bm{\Sigma} = \left( \begin{array}{ccccccc}
       \bm{I}_\alpha  & \bm{0}  \\
       \bm{0}   & \bm{0}  \\
      \bm{0}    &   \bm{I}_\beta    \\
      \end{array}
\right), \label{eq:cov}
\end{align}
where $\alpha = \lceil \frac{n}{2}\rceil  -\lfloor \frac{n-k}{2}\rfloor$\footnote{Knowing that
$n$ and $k$ cannot be simultaneously even for a real DFT code \cite{Marshall}, one can show that $\alpha =  \lceil \frac{k+1}{2}\rceil$.}, $\beta=k-\alpha$,
and the sizes of zero blocks are such that $\Sigma$ is an $n\times k$ matrix \cite{vaezi2012frame,rath2004subspace,Takos,gabay2007joint}.
Then, for any $\bm{u}$, this enforces the spectrum of the codeword
\begin{align}
\bm{c}=\bm{G}\bm{u},
\label{eq:code}
\end{align}
to have $n-k$ consecutive zeros, which is
required for any BCH code \cite{blahut2003algebraic}.
The parity-check matrix $\bm{H}$, on the other hand, is constructed by using the $n-k$ columns of $\bm{W}_n^H$
corresponding to the $n-k$ zero rows of $\bm{\Sigma}$.
Therefore, by virtue of the unitary property of $\bm{W}_n^H$, $\bm{H}$ is the null space of $\bm{G}$, i.e.,
\begin{align}
\bm{HG}=\bm{0}.
\label{eq:HG}
\end{align}

In the rest of this paper, we use the term DFT code in lieu of real BCH-DFT code.

\subsection{Decoding}\label{sec:Dec}

Before introducing the decoding algorithm, we define some notation and basic concepts.
Let $\bm{r} = \bm{c} + \bm{e}$ be the received vector (a noisy version of $\bm{c}$),
where $\bm{c}$ is a codeword generated by \eqref{eq:code}. Suppose that $\bm{e}$ is an error vector
with $\nu$ nonzero elements at positions $ i_1, \hdots, i_{\nu}$; the magnitude of error at position $i_p$ is $e_{i_p}$.
 Then, we can compute
\begin{align}
\mbox{\boldmath$ \bm{s}= \bm{Hr}= \bm{H}(\bm{c} + \bm{e})= \bm{He}$},
\label{eq:synd}
\end{align}
where $\bm{s}=[s_1,\, s_2, \hdots, s_d]^T$  is a complex vector with
\begin{align}
s_m=\frac{1}{\sqrt{n}}\sum_{p=1}^{\nu} e_{i_p}X_p^{\alpha-1+m}, \qquad m=1,\hdots, d
\label{eq:syndsamp}
\end{align}
in which $\alpha =  \lceil \frac{k+1}{2}\rceil$ as defined in \eqref{eq:cov}, $X_p= e^{ \frac{-j2\pi i_p}{n}}$, and  $ p=1, \hdots, \nu.$
Next, we define the syndrome matrix
\begin{align}
\bm{S}_m&=  \left[ \begin{array}{cccc}
 s_1 & s_2 & \hdots &s_{d-m+1}  \\
 s_2 & s_3 & \hdots & s_{d-m+2}  \\
 \vdots & \vdots & \ddots & \vdots\\
s_{m} & s_{m+1} & \hdots & s_{d}  \\  \end{array}
\right],
\label{eq:syndmatrix}
 \end{align}
for $\nu+1 \le m \le d-\nu+1$ \cite{rath2004subspace}. Also, we define the covariance matrix as
\begin{align}
\bm{R}=\bm{S}_m\bm{S}_m^H.
\label{eq:R}
\end{align}

For decoding, we use the extension of the well-known Peterson-Gorenstein-Zierler
(PGZ) algorithm to the real field \cite{blahut2003algebraic}.
This algorithm, aimed at detecting, localizing, and calculating the errors,
works based on the syndrome of error. We summarize the main steps of this algorithm, adapted for a DFT code of length $n$, in the following.
\begin{itemize}
  \item \textbf{Error detection:} Determine the number of errors $\nu$ by constructing a syndrome matrix and finding its rank.
  \item \textbf{Error localization:} Find the coefficients $\Lambda_1, \hdots, \Lambda_\nu$ of the {\it error-locating polynomial} $\Lambda(x) = \prod_{\substack{i=1}}^{\nu} (1-xX_i)$ whose roots
   $X_1^{-1}, \hdots, X_\nu^{-1}$ are used to determine error locations; the errors are then
in the locations $i_1, \hdots, i_\nu$ such that $X_1 = \omega^{i_1}, \hdots, X_\nu = \omega^{i_\nu}$ and  $\omega= e^{-j\frac{2\pi}{n}}$.
  \item \textbf{Error calculation:} Finally, calculate the error magnitudes by solving a set of linear equations whose constants coefficients are powers of $X_i$.
\end{itemize}

In practice however, the received vector is distorted because of quantization.
Let $\bm{\hat c}$ and $\bm{q}$ denote the quantized codeword and quantization
noise so that $\bm{\hat c}=\bm{c}+\bm{q}$. Therefore, $\bm{r} = \bm{\hat c} + \bm{e}$ and
 its syndrome is no longer equal to the syndrome of error because
\begin{align}
\bm{Hr}= \bm{H}(\bm{ c} + \bm{q}+ \bm{e})= \bm{s}_q + \bm{s} = \tilde{\bm{s}},
\label{eq:syndq}
\end{align}
where $\bm{s}_q \equiv\bm{Hq}$ and  $\bm{q}=[q_1,\, q_2, \hdots, q_n]^T$ is the quantization error.
The distorted syndrome samples can be written as
\begin{align}
\tilde{s}_m = \frac{1}{\sqrt{n}}\sum_{p=1}^{\nu} e_{i_p}X_p^{\alpha -1 + m} +  \frac{1}{\sqrt{n}}\sum_{p'=1}^{n} q_{i_{p'}}X_{p'}^{p'-1}.
\end{align}
The distorted syndrome matrix $\tilde{\bm{S}}_m$  and the corresponding covariance matrix $\tilde{\bm{R}}=\tilde{\bm{S}}_m\tilde{\bm{S}}_m^H$
are defined similar to \eqref{eq:syndmatrix} and \eqref{eq:R} but for the distorted syndrome samples.

While the exact value of the error is determined neglecting quantization,
the decoding becomes an {\it estimation} problem in the presence
of quantization. Then, it is imperative to modify the PGZ algorithm to decode the
errors reliably \cite{blahut2003algebraic,rath2004subspace,Takos,gabay2007joint,Vaezi2011LS}.
In the remainder of this section, we discuss this problem and also improve the error
detection and localization, by introducing a slightly different version
of the existing methods.

\subsubsection{Error detection}
\label{subsec:Pd}
For a given DFT code, we first fix an empirical threshold $\theta$
 based on eigendecomposition of $\tilde{\bm{R}}$ when the codewords are
 error-free, i.e., when only the quantization error exist.
This threshold is on the magnitude of eigenvalues, rather than the determinant of $\tilde{\bm{R}}$.
Let $\lambda_{\mathrm{max}}$ denote the largest eigenvalue of $\tilde{\bm{R}}$ for $m=t+1$.
We find $\theta$ such that, for a desired probability of correct detection $p_d$,
\begin{align}
\mathrm{Pr}(\lambda_{\mathrm{max}}<\theta)\ge p_d.
\end{align}
In practice, when errors can occur, we estimate the number of errors
by the number of eigenvalues of $\tilde{\bm{R}}$
greater than $\theta$. This one step estimation is better than the original
estimation in the PGZ algorithm \cite{blahut2003algebraic,Takos},
where the last row and column of $\bm{S}_t$ are removed until
coming up with a non-singular matrix. The improvement comes from
incorporating all syndrome samples, rather than some of them, for decision making.
\subsubsection{Error localization} \label{sec:DecLoc}
The {\it Subspace}  or {\it coding-theoretic} error localizations \cite{rath2004subspace}
can be used to find the coefficients $\Lambda_1, \hdots, \Lambda_\nu$ of the error-locating polynomial.
The subspace approach is, however, more general than the coding-theoretic approach in the sense that
it can use up to $t+1-\nu$ degrees of freedom to localize $\nu$ errors,
compared to just one degree of freedom in the coding-theoretic approach.
Hence, it improves the error localization in the presence of the quantization noise.
We apply the subspace error localization  both to the syndrome-
and parity-based DSC (see Section~\ref{sec:WZ}),
similar to that in channel coding \cite{rath2004subspace}.
To this end, we eigen-decompose the covariance matrix
$\tilde{\bm{R}}=\tilde{\bm{S}}_m\tilde{\bm{S}}_m^H$ for $m=t+1.$\footnote
{Although $\nu+1 \le m \le d-\nu+1$, the best result is achieved for $m=t+1$ \cite{rath2004subspace}.
For this $m$, the size of $\tilde{\bm{S}}_m$ is either $(t+1)\times (t+1)$ or $(t+1) \times t$.}
It results in two orthogonal subspaces, the {\it error} and  {\it noise} subspaces.
There are $m-\nu$ vectors in the noise subspace; we use them to localize errors
using the MUSIC algorithm \cite{rath2004subspace,schmidt1986multiple}.
However, one should note that the way we compute the syndrome
in DSC is different from that in channel coding which was presented in \eqref{eq:synd}
and \eqref{eq:syndq}; this will be elaborated in Section~\ref{sec:WZ}.

\subsubsection{Error calculation}
This last step is rather simple. Let $\bm{H}_e$ be
the matrix consisting of the columns of \mbox{\boldmath$H$} corresponding to error indices.
The errors magnitude
 $\bm{e}= \left[ e_{i_1},\, e_{i_2}, \hdots, e_{i_\nu} \right]^T $
can be determined by solving
\begin{align}
 \bm{H}_{e}\bm{e}=\tilde{\bm{s}},
  \label{eq:dec1}
 \end{align}
in a {\it least squares} sense, for example.
This completes the error correction algorithm by calculating the error
 vector.

\subsection{Performance Compared to Binary Codes}\label{sec:Per}

DFT codes by construction are capable of decreasing the quantization error.
When there is no error, an $(n, k)$  DFT code brings down the MSE
below the quantization error level with a factor of $R_c=k/n$ \cite{rath2004frame,goyal2001quantized}.
This is also shown to be valid for channel errors, as long as channel can be modeled as an additive noise.
To appreciate this, one can consider the generator matrix of a DFT code as
{\it analysis frame operator} of a tight frame \cite{rath2004frame}; it is
known that frames are resilient to any additive noise, and
tight frames reduce the MSE $k/n$ times \cite{kovacevic2008introduction}.
Hence, DFT codes can result in
a MSE even less than the quantization error level whereas the
 MSE in a binary code is obviously lower-bounded by the quantization error level.


\section{Wyner-Ziv Coding Using DFT Codes} \label{sec:WZ}

 In this section, we use
DFT codes to do Wyner-Ziv coding in the real field.
This is accomplished by using DFT codes for binning and
transmitting compressed signal, in the form of syndrome or parity samples,
in a digital communication system.
Let $\mybf{x}$ be a sequence of real random variables $x_1x_2 \hdots x_n$,
and $\mybf{y}$ be a noisy version of $\mybf{x}$
such that $y_i=x_i + e_i$, where $e_i$ is continuous, i.i.d., and independent of $x_i$, as described in \eqref{eq:corE2}.
The lower-case letters $x$, $y$, and $e$, respectively, are used to show the realization of the random variables $X$, $Y$, and $E$.
Since $\bm{e}$ is continuous, this model precisely captures any variation of $\bm{x}$, so it can
model correlation between $\mybf{x}$ and $\mybf{y}$ accurately.
This correlation model is important, for example, in video coders
that exploit Wyner-Ziv concepts, e.g., when the decoder builds side information
via extrapolation of previously decoded frames or interpolation of key frames \cite{bassi2008source}.
In this paper, we use the GE model with $q_2=0$ in \eqref{eq:corE2} where $\bm{e}$ is a sparse vector
inserting up to $t$ errors in each codeword.

\subsection{Syndrome Approach} \label{synd}
\subsubsection{Encoding}

\begin{figure}[!t]
  \centering
 \includegraphics [scale=1.2] {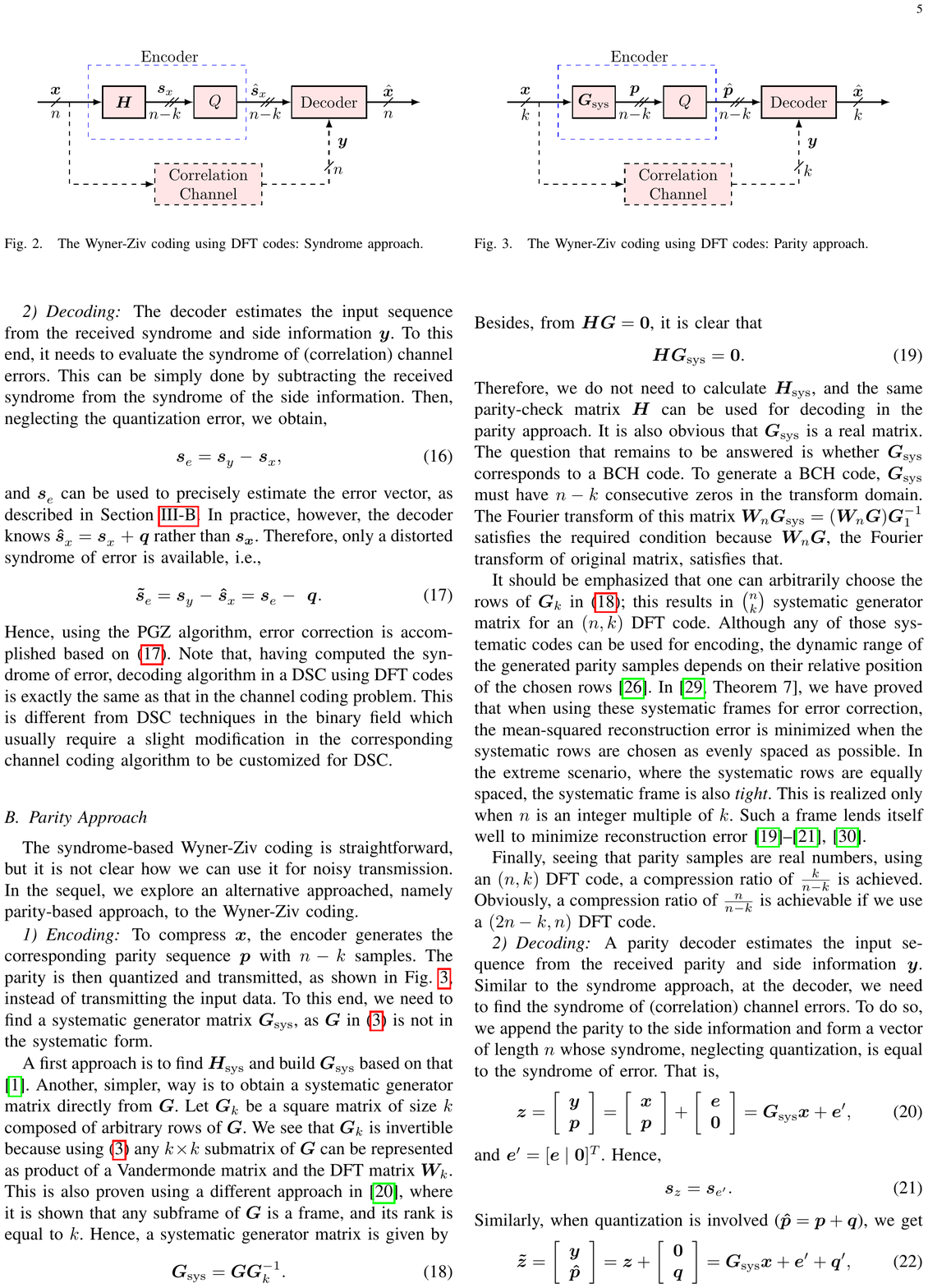}
  \caption{The Wyner-Ziv coding using DFT codes: Syndrome approach.}
\label{fig:WZsynd}
\end{figure}

Given \mybf{H}, to compress an arbitrary sequence of data samples, we
multiply it with \mybf{H} to find the corresponding
syndrome samples \bfsub{s}{x}\mybf{=Hx}. The syndrome
is then quantized (\bfsubp{\hat s}{x}{}$\;=\;$\bfsubp{s}{x}{}$\;+\; \mybf{q}$) and transmitted
over a noiseless digital communication system, as shown in
Fig. \ref{fig:WZsynd}. Note that
 \mbox{\boldmath$s$}$_x$, \bfsub{\hat s}{x} are both complex vectors of length $n-k$.
Thus, it seems that to transmit each sample we need to send two real numbers,
one for the real part and one for the imaginary part, which halves the compression ratio.
However, we observe that the syndrome of a DFT code is symmetric, as stated below.

\begin{lem}
The syndrome of an $(n,k)$ DFT code satisfies
\begin{align}
s_m=  \left\{
  \begin{array}{l l}
    s_{d-m+1}^{\ast}, & \quad \text{if $k$ is odd,}\\
     s_{d-m}^{\ast}, & \quad \text{if $k$ is even,}
     \label{consts}
  \end{array} \right.
  \end{align}
for $m=1,\hdots,d$ and  $d\triangleq n-k$.
\label{lem1}
\end{lem}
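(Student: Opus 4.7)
The plan is to work directly from the closed-form expression for $s_m$ in \eqref{eq:syndsamp} and reduce the claimed symmetry to the familiar Hermitian structure of the DFT of a real sequence.

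First I would take the complex conjugate of $s_m$. Because the error magnitudes $e_{i_p}$ are real and $X_p = e^{-j2\pi i_p/n}$ satisfies both $X_p^{*} = X_p^{-1}$ and $X_p^{n} = 1$, one has
\begin{align*}
s_m^{*} \;=\; \frac{1}{\sqrt{n}} \sum_{p=1}^{\nu} e_{i_p} X_p^{-(\alpha-1+m)}
\;=\; \frac{1}{\sqrt{n}} \sum_{p=1}^{\nu} e_{i_p} X_p^{\,n-\alpha+1-m}.
\end{align*}
The right-hand side again has the form of a syndrome sample $s_{m'}$, provided the exponents agree modulo $n$, i.e.\ $\alpha - 1 + m' \equiv n - \alpha + 1 - m \pmod{n}$, which rearranges to $m' \equiv n + 2 - 2\alpha - m \pmod{n}$.

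The remainder is just a case split on the parity of $k$ using the footnoted identity $\alpha = \lceil (k+1)/2 \rceil$. For $k$ odd, $\alpha = (k+1)/2$, so $m' = n - k + 1 - m = d - m + 1$; for $k$ even, $\alpha = (k+2)/2$, so $m' = n - k - m = d - m$. These are exactly the two relations in the lemma.

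The one point I expect to require care is the boundary $m = d$ when $k$ is even, which formally sends the conjugate partner to index $m' = 0$, outside the nominal range $1,\ldots,d$. The cleanest reading is to extend \eqref{eq:syndsamp} to $m = 0$; the identity $s_d = s_0^{*}$ is then simply the Hermitian symmetry of the DFT of the real error vector at the paired frequencies $\alpha - 1$ and $n - \alpha + 1$. Beyond this bookkeeping, I do not anticipate any real obstacle—the whole argument is one line of conjugation plus one line of modular arithmetic.
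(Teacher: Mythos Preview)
Your argument is correct and is essentially identical to the paper's: the paper computes $s_{d-m+1}$ directly from \eqref{eq:syndsamp}, reduces the exponent using $X_p^n=1$, and identifies it with $s_m^{*}$ for odd $k$, leaving even $k$ to the reader. Your version runs the same calculation in the reverse direction (starting from $s_m^{*}$) and is in fact slightly more complete, since you treat both parities and flag the $m=d$ boundary for even $k$, which the paper does not discuss.
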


\begin{proof}
See Appendix~\ref{sec:app1}.
\end{proof}
The above lemma implies that, for any $d$, it suffices to know the first $\lceil \frac{d}{2} \rceil$ syndrome samples.
 We know that syndromes are complex numbers in general; however, from $s_{d-m+1}=s_m^{\ast}$
it is clear that if $d$ is an odd number, $s_m$ is real for $m=\lceil \frac{d}{2} \rceil$.
Therefore, for an $(n,k)$ code with odd $k$, transmitting $n-k$ real numbers
suffices. This results in a compression ratio of $\frac{n}{n-k}$ for the Slepian-Wolf encoder.
Yet, one can check that for even $k$, we have to transmit $n-k+1$ real samples, which incurs a slight
loss in compression. It is however negligible for large $n$.

\subsubsection{Decoding}

The decoder estimates the input sequence from the received syndrome and side
information $\bm{y}$. To this end,
it needs to evaluate the syndrome of (correlation) channel errors. This
can be simply done by subtracting the received syndrome from the syndrome of
the side information. Then, neglecting the quantization error, we obtain,
\begin{align}
\mbox{\bfsubp{s}{e}{}$ \; = \; $\bfsubp{s}{y}{}$\;- \;$\bfsubp{s}{x}{}},
\label{eq:synd5}
\end{align}
and \bfsubp{s}{e}{} can be used to precisely estimate the error vector,
as described in Section \ref{sec:Dec}.
In practice, however, the decoder knows \bfsubp{\hat s}{x}{}$\;=\;$\bfsubp{s}{x}{}$\;+\; \mybf{q}$
rather than \mbox{\boldmath$ s_x$}. Therefore, only a distorted syndrome
of error is available, i.e.,
\begin{align}
\mbox{\bfsubp{\tilde{s}}{e}{}$ \; = \; $\bfsubp{s}{y}{}$ \; - \;$\bfsubp{\hat s}{x}{}$ \; = \; $\bfsubp{s}{e}{}$ \; - \; $ \mybf{q}}.
\label{eq:synd6}
\end{align}
Hence, using the PGZ algorithm, error correction is accomplished based on \eqref{eq:synd6}.
Note that, having computed the syndrome of error, decoding algorithm
in a DSC using DFT codes is exactly the same as that in the channel coding problem.
This is different from DSC techniques in the binary field which
usually require a slight modification in the corresponding
channel coding algorithm to be customized for  DSC.

\subsection{Parity Approach} \label{par}

The syndrome-based Wyner-Ziv coding is straightforward, but it is not clear how
we can use it for noisy transmission.
In the sequel, we explore an alternative approached, namely parity-based approach, to the Wyner-Ziv coding.

\subsubsection{Encoding}
\label{subsec:parEnc}
To compress $\bm{x}$, the encoder generates the corresponding
parity sequence $\bm{p}$ with $n-k$
samples. The parity is
then quantized and transmitted, as shown in
Fig.~\ref{fig:WZparity}, instead of transmitting the input data.
To this end, we need to find a systematic
generator matrix $\bm{G}_{\mathrm{sys}}$, as $\bm{G}$ in \eqref{eq:G1}
is not in the systematic form.
%

A first approach is to find  $\bm{H}_{\mathrm{sys}}$ and build $\bm{G}_{\mathrm{sys}}$ based on that \cite{Vaezi2011DSC}.
Another, simpler, way is to obtain a systematic generator matrix directly from $\bm{G}$.
Let $\bm{G}_{k}$ be a square matrix of size $k$ composed of arbitrary rows of $\bm{G}$.
We see that $\bm{G}_{k}$ is invertible because using \eqref{eq:G1} any $k\times k$ submatrix of
$\bm{G}$ can be represented as product of a Vandermonde matrix and the DFT matrix $\bm{W}_{k}$.
This is also proven using a different approach in \cite{rath2004frame}, where it is shown that any subframe of $\bm{G}$
is a frame, and its rank is equal to $k$.
Hence, a systematic generator matrix is given by
  \begin{align}
  \bm{G}_{\mathrm{sys}} =\bm{G}\bm{G}_{k}^{-1}.
\label{eq:Gsys2}
\end{align}
Besides, from $\bm{H}\bm{G}=\bm{0}$, it is clear that
\begin{align}
 \bm{H}\bm{G}_{\mathrm{sys}}=\bm{0}.
\label{eq:null}
\end{align}
Therefore, we do not need to calculate $\bm{H}_{\mathrm{sys}}$, and the same
parity-check matrix $\bm{H}$ can be used for decoding in the
parity approach.
It is also obvious that $\bm{G}_{\mathrm{sys}}$ is a real matrix.
The question that
remains to be answered is whether $\bm{G}_{\mathrm{sys}}$ corresponds to a BCH code.
To generate a BCH code, $\bm{G}_{\mathrm{sys}}$ must have $n-k$ consecutive
zeros in the transform domain. The Fourier transform of this matrix $\bm{W}_n\bm{G}_{\mathrm{sys}}=(\bm{W}_n\bm{G})\bm{G}_{1}^{-1}$
 satisfies the required condition because $\bm{W}_n\bm{G}$,
the Fourier transform of original matrix, satisfies that.

\begin{figure}[!t]
  \centering
 \includegraphics [scale=1.2] {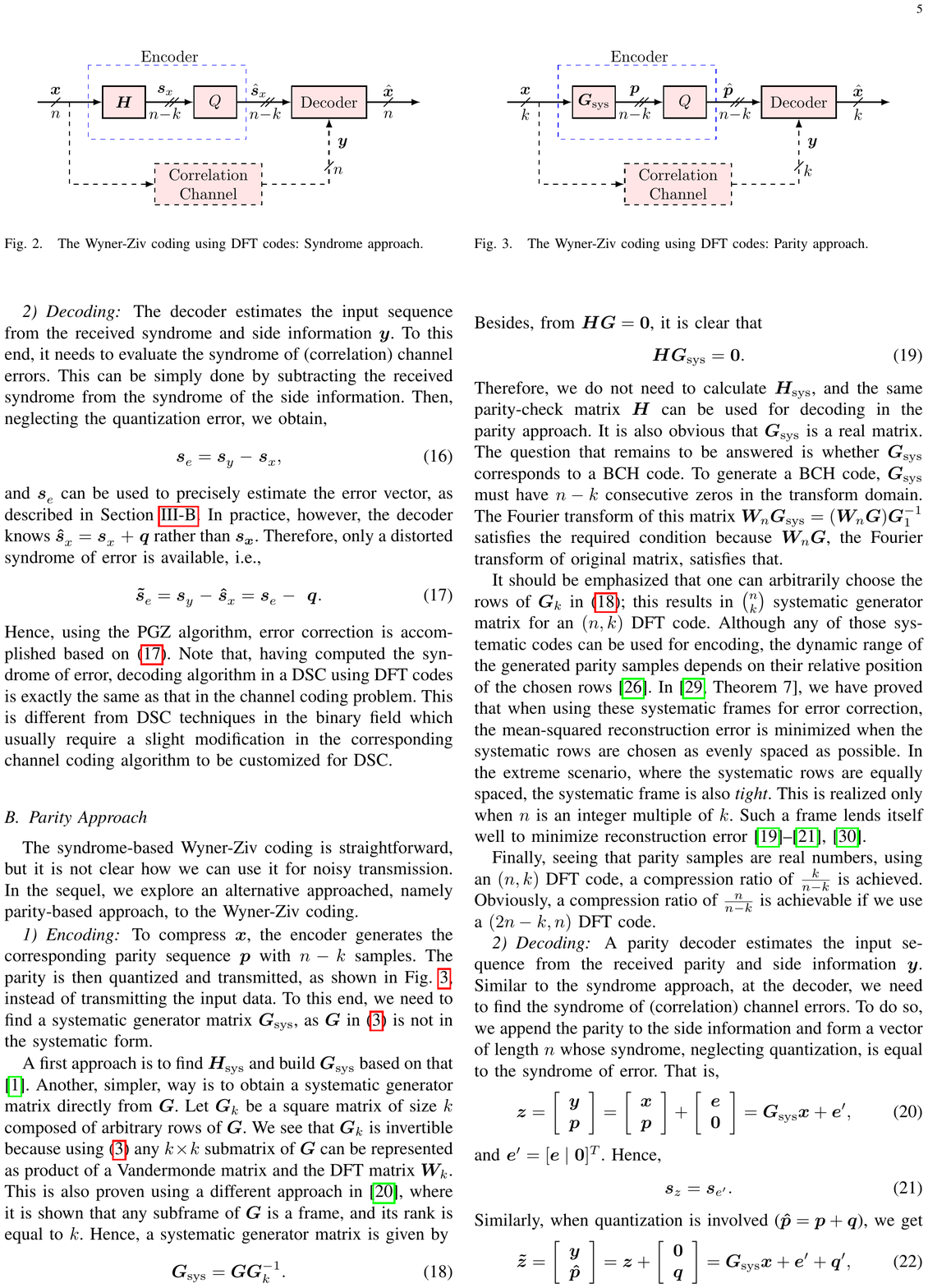}
  \caption{The Wyner-Ziv coding using DFT codes: Parity approach.}
\label{fig:WZparity}
\end{figure}

It should be emphasized that one can arbitrarily choose the rows of $\bm{G}_{k}$ in \eqref{eq:Gsys2};
this results in $\binom{n}{k}$ systematic generator matrix for an $(n,k)$ DFT code.
Although any of those systematic codes can be used for encoding,
the dynamic range of the generated parity samples depends on their relative position
of the chosen rows \cite{vaezi2012frame}.
In \cite[Theorem~7]{vaezi2012frameJ}, we have proved  that when using these
systematic frames for error correction, the mean-squared reconstruction error
 is minimized when the systematic rows are chosen as evenly spaced as possible.
 In the extreme scenario, where the systematic rows are equally spaced, the
 systematic frame is also {\it tight}. This is realized only when $n$ is an integer multiple of $k$.
 Such a frame lends itself well to minimize reconstruction error \cite{goyal2001quantized,kovacevic2007life,rath2004frame,casazza2003equal}.

Finally, seeing that parity samples are real numbers,
using an $(n, k)$ DFT code, a compression ratio  of $\frac{k}{n-k}$
is achieved. Obviously, a compression ratio  of $\frac{n}{n-k}$ is
achievable if we use a $(2n-k, n)$ DFT code.

\subsubsection{Decoding}
\label{subsec:parDec}
A parity decoder estimates the input sequence from the received parity and side information $\bm{y}$.
Similar to the syndrome approach, at the decoder, we need to find the syndrome
of (correlation) channel errors.
To do so, we append the parity to the side information and
form a vector of length $n$ whose syndrome, neglecting quantization, is equal to the
syndrome of error. That is,
 \begin{align}
\bm{z}=\left[ \begin{array}{cc} \bm{y}   \\
 \bm{p} \\  \end{array} \right]
 =\left[ \begin{array}{cc} \bm{x}   \\
 \bm{p}  \end{array}\right]  +\left[ \begin{array}{cc} \bm{e}   \\
 \bm{0}\end{array}\right]= \bm{G}_{\mathrm{sys}}\bm{x}+\bm{e}',
 \end{align}
and $\bm{e}' = [\bm{e} \;|\; \bm{0}]^T $. Hence,
\begin{align}
 \bm{s}_z=\bm{s}_{e'}.
 \end{align}
Similarly, when quantization is involved ($\bm{\hat p}=\bm{p}+\bm{q}$), we get
  \begin{align}
 \bm{\tilde{z}} =\left[ \begin{array}{cc} \bm{y}   \\
 \bm{\hat p} \\  \end{array} \right]
 =\bm{z} +  \left[ \begin{array}{cc} \bm{0}   \\
\bm{q}\end{array}\right]= \bm{G}_{\mathrm{sys}}\bm{x}+\bm{e}'+\bm{q}',
 \end{align}
and
\begin{align}
 \bm{s}_{\tilde{z}}=\bm{s}_{e'}+\bm{s}_{q'},
 \end{align}
 where, $\bm{q}' = [\bm{q} \;|\; \bm{0}]^T $, and  $\bm{s}_{q'} \equiv \bm{Hq}'$.
Therefore, we obtain a distorted version of error syndrome.
In both cases, the rest of the algorithm, which is based on the
syndrome of error, is similar to that in the channel coding problem using DFT codes,
as explained in Section~\ref{sec:DecLoc}.

Error localization algorithm for the parity-based DSC \cite{Vaezi2011DSC} can be further improved
using the fact that parity samples are error-free.
As parity samples are transmitted over a noiseless channel, the error locations, in the codewords,
are restricted to the systematic samples.
Therefore, we can exclude the set of roots corresponding to the location of the parity samples.
We call this {\it adapted} error localization.
Furthermore, it makes sense to use a code with evenly-spaced parity samples so as to
optimize the location of error-free and error-prone samples in the codewords.
Such a code maximizes the distance between the
error-prone roots of the code; hence, it helps decrease the probability of incorrect decision.

\subsection{Comparison Between the Two Approaches} \label{comp}

\subsubsection{Rate}
As it was shown earlier, using an $(n, k)$ code
the compression ratio in the syndrome and parity approaches, respectively, is $\frac{n}{n-k}$
and $\frac{k}{n-k}$. Hence, for a given code, the parity approach is $\frac{k}{n}=R_c <1$ times
less efficient than the syndrome approach. Conversely, we can find two different codes that result in
same compression ratio, say $\frac{n}{n-k}$. We mentioned that in the parity approach,
a $(2n-k, n)$ code can be used for this matter, whereas an $(n, k)$
DFT code gives the desired compression ratio in the syndrome approach.
Thus, for a given compression ratio the syndrome approach implies a code with smaller rate compared to the code
required in the parity approach.

\subsubsection{Performance}
From frame theory, we know that DFT frames are tight, and
an $(n,k)$ tight frame reduces the quantization error with a factor of $R_c=\frac{k}{n}$ \cite{kovacevic2008introduction,rath2004frame,goyal2001quantized}.
This result is extended to errors, given that channel can be modeled by an additive noise \cite{Vaezi2011LS}.
The MSE performance of systematic DFT frames also linearly depends on the code rate, though they are not necessarily tight \cite{vaezi2012frame,vaezi2012frameJ}.
Therefore, for codes with the same error correction capability, the lower the code rate the better the error correction performance.
This implies a better performance for syndrome-based DSC.
Further, a $(2n-k, n)$ code has $n-k$ roots more than an $(n, k)$ code on the unit circle; hence,
the roots are closer to each other and the probability of incorrect localization of errors increases.

Additionally, from rate-distortion theory we know that the rate required to transmit a
Gaussian source logarithmically increases with the source variance \cite{Cover}.
Thus, in a system that uses a real-number code for encoding, since coding is performed before quantization,
 the variance of transmitted sequence depends on the behavior of the
encoding matrix. In the syndrome-based DSC we transmit $\bm{s}=\bm{H}\bm{x}$.
One can check that $\sigma_{\bm{s}}=\sigma_{\bm{x}}$ \cite{vaezi2012frameJ}.
Unlike that, in the parity-based DSC, the variance of the parity samples is larger than that of the inputs.
More precisely, in an $(n,k)$  systematic DFT code, if  $\bm{c}=\bm{G}_{\mathrm{sys}}\bm{x}$, then $\sigma^2_{\bm{c}}=\gamma  \sigma^2_{\bm{x}}$ where
$\gamma = \frac{1}{n}\tr\left(\bm{G}_{\mathrm{sys}}^H\bm{G}_{\mathrm{sys}}\right) \ge 1$ \cite{vaezi2012frame}.
Since we can write  $c=[\bm{x} \;|\; \bm{p}]^T$, we have
\begin{align}
\sigma^2_{\bm{p}}= \frac{\gamma n -k}{n-k} \sigma^2_{\bm{x}} \ge \sigma^2_{\bm{x}}.
\label{eq:parVar}
\end{align}
From \cite[Theorem 7]{vaezi2012frameJ}, we know that the smallest  $\sigma_{\bm{p}}$
for a given DFT code is achieved when the parity samples, in the corresponding codewords, are located as ``evenly" as possible.

Considering the above arguments, one may expect the syndrome-based
approach to perform better than the parity-based one, for a given code or fixed compression ratio.
This is verified numerically in Section \ref{sec:sum}.
The parity-based DSC, however, has other advantages. For example, by puncturing some
parity samples {\it rate-adaptive} DSC, in the real field, is realized. Besides,
it can be easily extended to distributed joint source-channel coding, as explained
in the following section.

\section{Distributed Joint Source and Channel Coding}
\label{sec:DJSCC}

The concept of lossy DSC and Wyner-Ziv coding using DFT codes
was explained both for the syndrome and parity approaches in Section~\ref{sec:WZ}, where
syndrome or parity samples are quantized and transmitted over a {\it noiseless} channel.
This implies {\it separate} source and channel coding.
Although simple, the {\it separation theorem} is based
on several assumptions, such as the source and channel coders not being
constrained in terms of complexity and delay, which do not hold in many situations. It breaks
down, for example, for non-ergodic channels and real-time communication. In such
cases, it makes sense to integrate the design of the source and channel
coder systems, because {\it joint source-channel coding} (JSCC) can perform better
given a fixed complexity and/or delay constraints. Likewise, distributed JSCC (DJSCC) has been shown to
outperform separate distributed source and channel coding in some practical cases \cite{xu2007distributed}.
DJSCC has been addressed in \cite{aaron2002compression,xu2007distributed,liveris2002joint,garcia2005distributed},
using different binary codes.

In this section, we extent the parity-based Wyner-Ziv coding of analog sources to the
case where errors in the transmission are allowed. Thus, we introduce distributed JSSC
of analog correlated sources in the analog domain.
To do this, we use a single DFT code both to {\it compress} $\bm{x}$ and {\it protect} it
against channel variations; this gives rise to a new framework for DJSCC,
in which quantization is performed after
doing JSCC in the analog domain.
This scheme directly maps short source
blocks into channel blocks, and thus it is well suited to low-delay coding.

\begin{figure}[!t]
  \centering
 \includegraphics [scale=1.2] {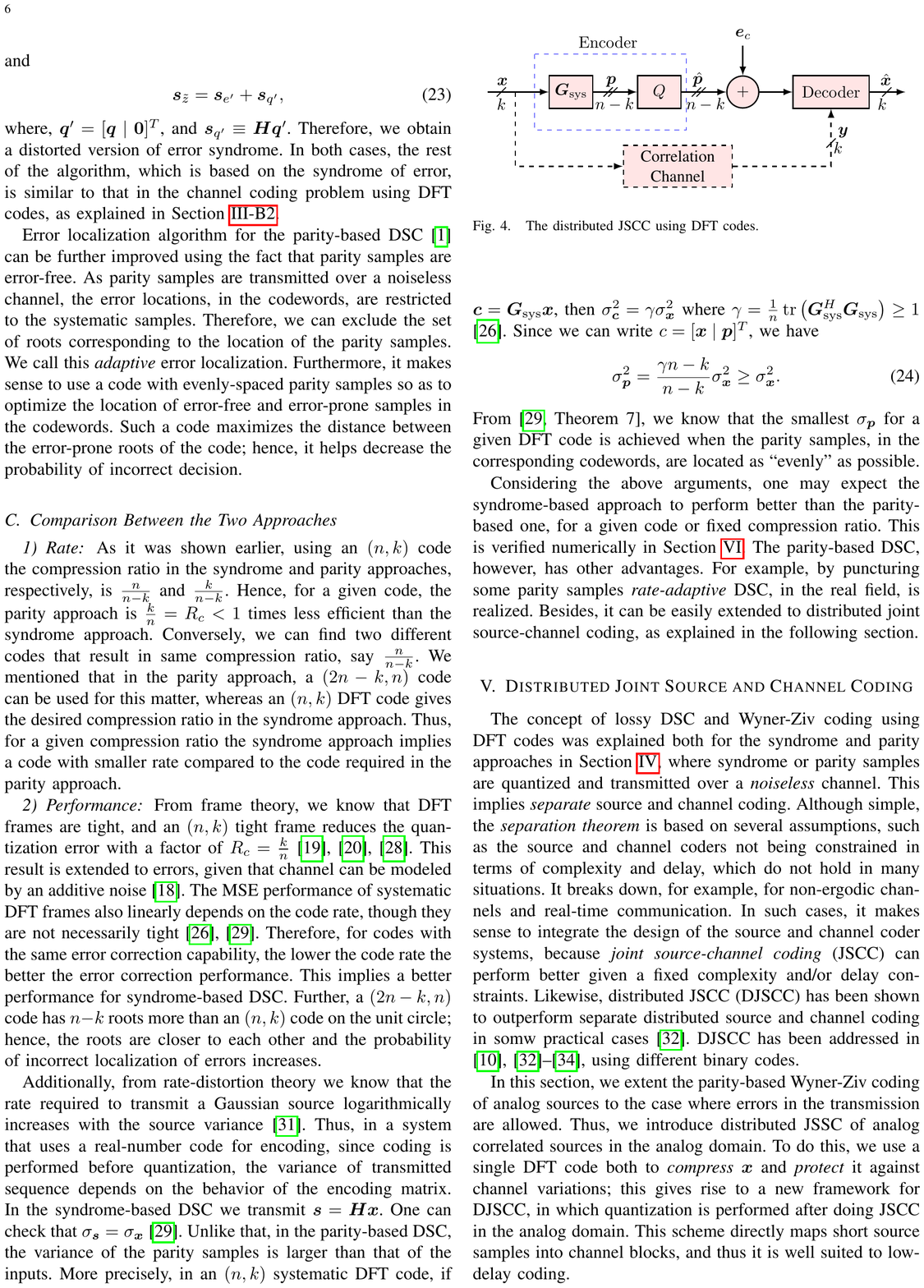}
  \caption{Joint source-channel coding (JSCC) with side information at the decoder based on DFT codes.
Obviously, the scheme can be extended to distributed JSCC.}
\label{fig:DJSCC}
\end{figure}

\subsection{Coding and Compression}

To compress and protect $\bm{x}$, the encoder generates the
parity sequence $\bm{p}$ of $n-k$
samples, with respect to a good systematic DFT code.
The parity is then quantized and transmitted over a noisy channel, as shown in
Fig.~\ref{fig:DJSCC}. To keep the dynamic range of parity samples as small as possible,
we make use of optimal systematic DFT codes, proposed in  \cite{vaezi2012frameJ}.
This increases the efficiency of the system for a fixed number of quantization levels.
Using an $(n, k)$ DFT code a total compression ration of $\frac{n}{n-k}$
is achieved. Obviously, if $n<2k$ compression is possible. However, since there
is little redundancy the end-to-end distortion could be high. Conversely, a code with $n>2k$
expands input sequence by adding \textit{soft redundancy} to protect it in a noisy channel.

\subsection{Decoding}

Let $\tilde{\bm{p}}=\hat{\bm{p}}+ \bm{e}_c$ be the received parity vector
which is distorted by quantization error $\bm{q}$ ($\hat{\bm{p}}= \bm{p}+\bm{q}$) and channel error $\bm{e}_c$.
Also, let $\bm{y} = \bm{x} + \bm{e}_v$ denote side information where $\bm{e}_v$ represents
the error due to the ``virtual'' correlation channel.
The objective of the decoder is to estimate the input sequence
from the received parity and side information.
Although we only need to determine $\bm{e}_v$, effectively it is required to find both $\bm{e}_v$
and $\bm{e}_c$.
From an error correction point of view, this is equal to finding the error vector $\bm{e} = [ \bm{e}_v \;\; \bm{e}_c]^T$
that affects the codeword $ [ \bm{x} \;\; \bm{p}]^T$.
Hence, to find the syndrome of error at the decoder, we append the parity $\tilde{\bm{p}}$ to the side information $\bm{y}$ and
form $\tilde{\bm{z}}$, a valid codeword
perturbed by quantization and channel errors,
\begin{align}
\tilde{\bm{z}}
 =\left[ \begin{array}{cc} \bm{x}   \\
 \bm{p}  \end{array}\right]  + \left[ \begin{array}{cc} \bm{e}_v   \\
 \bm{e}_c \end{array}\right] +  \left[ \begin{array}{cc} \bm{0}   \\
\bm{q}\end{array}\right]= \bm{G}_{\mathrm{sys}}\bm{x}+\bm{e}+\bm{q}'.
 \end{align}
Multiplying both sides by $\bm{H}$, we obtain
\begin{align}
 \bm{s}_{\tilde{z}}=\bm{s}_e+\bm{s}_{q'},
  \end{align}
where  $\bm{s}_e \equiv \bm{He}$ and $\bm{s}_{q'} \equiv \bm{Hq}'$.
Again, we use the GE model with $q_2=0$ in \eqref{eq:corE2} to generate $\bm{e}$.
It should be emphasized that for $\bm{q}=\bm{0}$, error vector
 can be determined exactly, as long as the number of errors is not greater than $t$.
In practice, quantization is also involved, and we
obtain only a distorted version of error syndrome.
Knowing the syndrome of error, we use the error detection and
localization algorithm, explained in Section~\ref{sec:Dec}, to find and correct error.

%

\section{Simulation Results}
\label{sec:sum}

We evaluate the performance of the proposed systems using
a Gauss-Markov source with mean zero, variance one, and
correlation coefficient 0.9; the effective range of the input
sequences is thus $[-4, 4]$.
The sources are binned using a $(7, 5)$ DFT code.
The compressed vector, either syndrome or parity, is then
quantized with a 6-bit uniform quantizer with step size $\Delta= 0.125$ and
transmitted over a noiseless communication media.
The ``virtual" correlation channel inserts $t$ errors,
generated by $\mathcal N(0,\sigma_e^2)$.
The decoder detects, localizes, and decodes errors.
We compare the MSE between transmitted and reconstructed
codewords, to measurers end-to-end distortion.
In all simulations, we use $10^5$
input blocks for each channel-error-to-quantization-noise ratio (CEQNR),
defined as $\sigma_e^2/\sigma_q^2$ and $\sigma_q^2 = \frac{\Delta^2}{12}$.
We vary the CEQNR and plot the resulting MSE.
The results are presented in Fig.~\ref{fig:syndQ} and compared
against the quantization error level in the existing lossy DSC methods.

\begin{figure}
  \centering
 \includegraphics [scale=0.75] {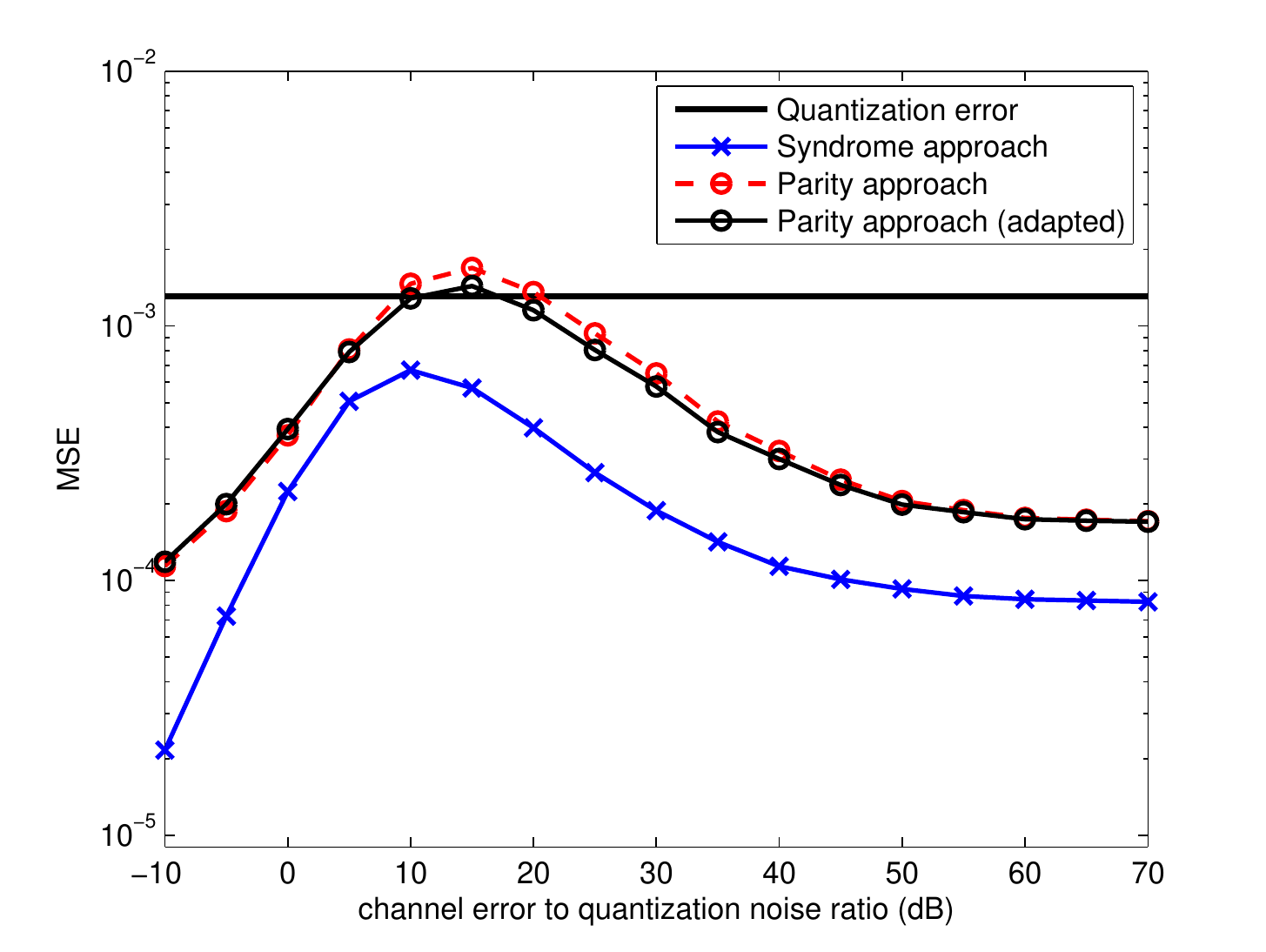}
  \caption{ Reconstruction error in the syndrome and parity approaches, using a $(7, 5)$ DFT code in Fig. 2, 3.
  For both schemes, the virtual correlation channel inserts one error
  at each input block.}
  \label{fig:syndQ}
\end{figure}

It can be observed that, in the syndrome approach the MSE is lower than
the quantization error level in all CEQNRs. Analogously,
in the parity approach, the MSE is less than
that level except for a small range of CEQNR; further, it slightly improves
when we use the adapted parity approach.  Note that in lossy DSC using binary
codes, the MSE can be equal to the quantization error level only if the bit error rate (BER)
for the Slepian-Wolf coder is zero. This ideal case is not necessarily attainable even
if capacity-approaching codes are used~\cite{Vaezi2012model}.
The performance of our systems considerably improves when CEQNR is very high.
The improvement is due to a
better error localization, since the higher the CEQNR, the better the error
localization, as shown in Fig.~\ref{fig:PoE}.
At very low CEQNRs, although error localization is poor, the MSE is still
very low. This is because, compared to the quantization error, the errors are so small
 that the algorithm mostly does not detect (and thus
localize) any error. Instead, it may occasionally
localize and correct quantization errors.
Note that, even if no errors are localized and corrected, the MSE is still very small
as the errors are negligible at very low CEQNRs.
Additionally, recall that the MSE is always reduced
with a factor of $R_c=\frac{k}{n}$, in an $(n,k)$ DFT code
\cite{goyal2001quantized,rath2004frame,Vaezi2011LS}.
It should be added that, for error detection, we use
the algorithm proposed in Section~\ref{subsec:Pd};  for this purpose, we first find
the threshold corresponding to $p_d=90\%$; it leads to $\theta=0.0014$.
The corresponding plots are not included, for brevity; even so,
similar plots are studied in the remainder of this section.
\begin{figure}
  \centering
  \includegraphics [scale=0.75] {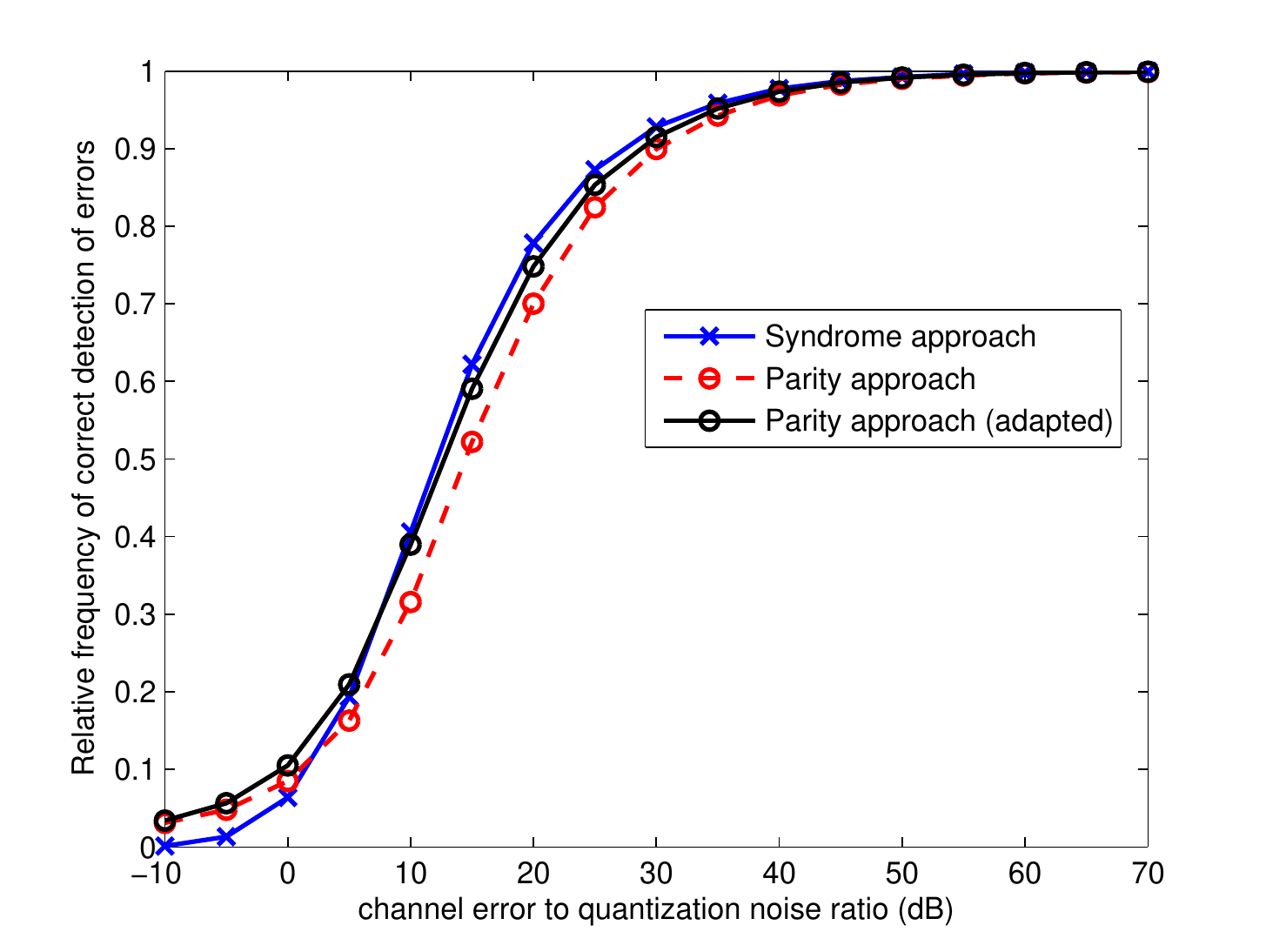}
  \caption{ Relative frequency of correct localization of correlation channel
   error in the syndrome and parity approaches, using a $(7, 5)$ DFT code.
   Since $\nu=t=\frac{n-k}{2}$, the coding theoretic
  and subspace error localizations give the same results; hence, we employ one of them in this set of plots.
  For the parity approach, we also plot the adapted error localization,
  introduced in Section~\ref{subsec:parDec},
  as well.  }
  \label{fig:PoE}
\end{figure}

In terms of compression, the efficiency of parity approach to the syndrome approach is $R_c=\frac{5}{7}$,
 as discussed earlier in Section~\ref{comp}.
Despite this, the performance of the parity approach is
not as good as that of the syndrome approach. One reason is that in this simulation
$1/5$ of samples are corrupted in the parity approach
while this figure is $1/7$ for the syndrome approach.
More importantly,
the dynamic range of the parity samples, generated by \eqref{eq:Gsys2},
could be much higher than that of the syndrome
samples; this range depends on the position of the parity samples in the codewords.
A wider range implies more quantization levels to achieve the same accuracy.
For the $(7,5)$ code, in the best case we get  $ \sigma_p= 2.17 \sigma_x$
since $\gamma_{\min} =  2.0645$ in \eqref{eq:parVar}.
This is realized for a codeword pattern $\times \times\times -\times \times -$,
where ``$\times$'s'' and ``$-$'s'' represent data and parity samples, respectively.

\begin{figure*}[t]
\centering
\subfigure[]{
\includegraphics[scale=.53]{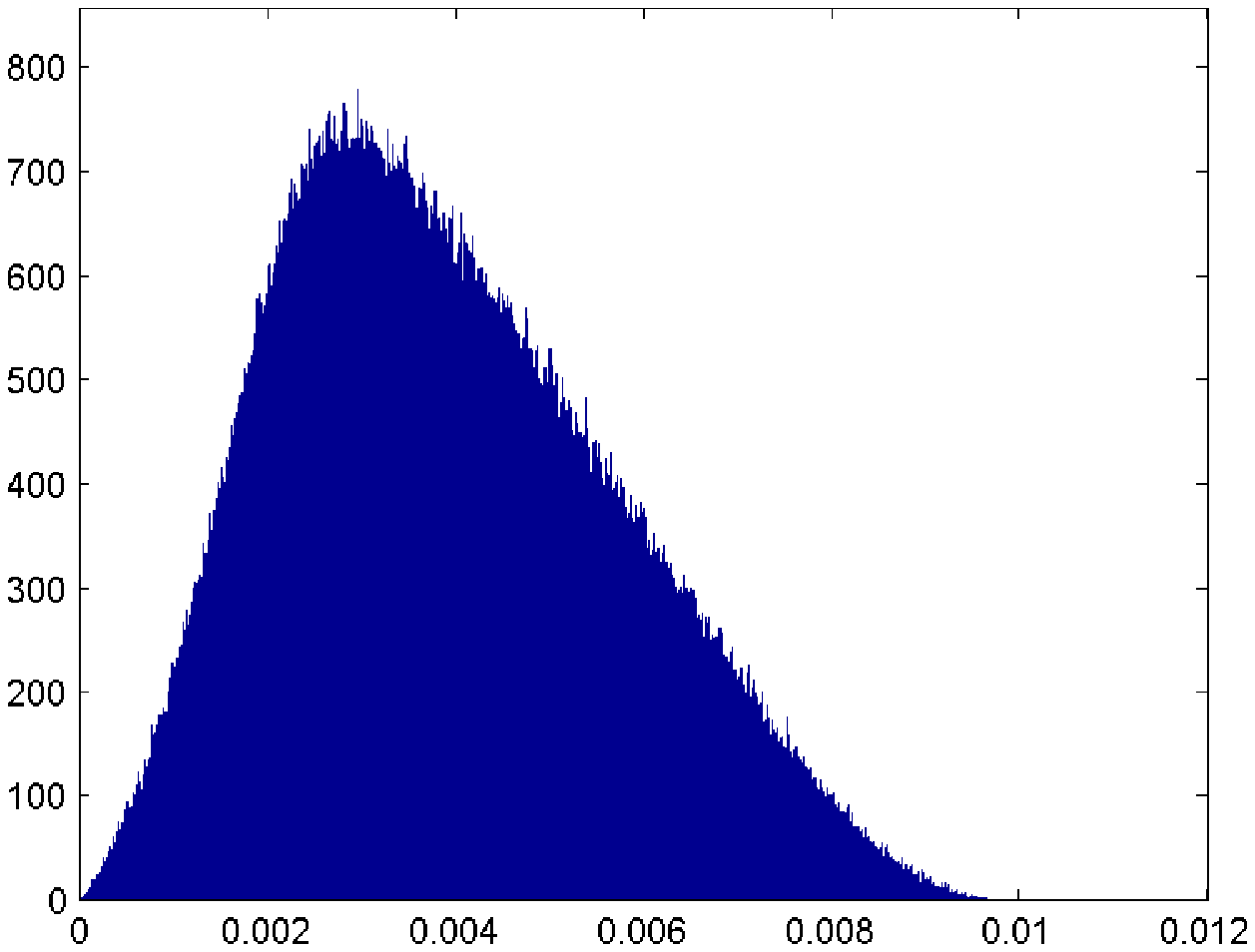}
\label{fig:subfig1}
}
\subfigure[]{
\includegraphics[scale=.53]{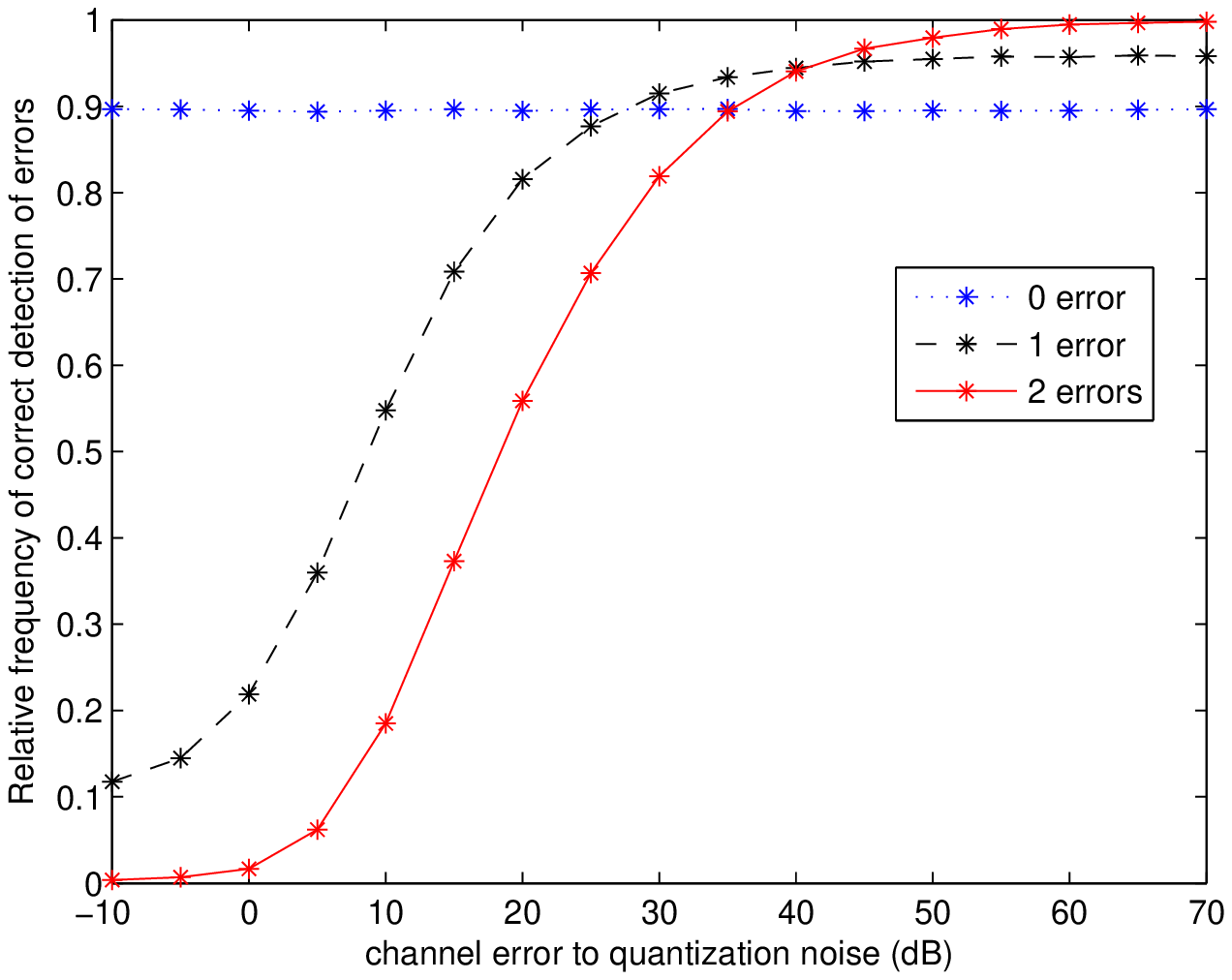}
\label{fig:subfig2}
}
\subfigure[]{
\includegraphics[scale=.53]{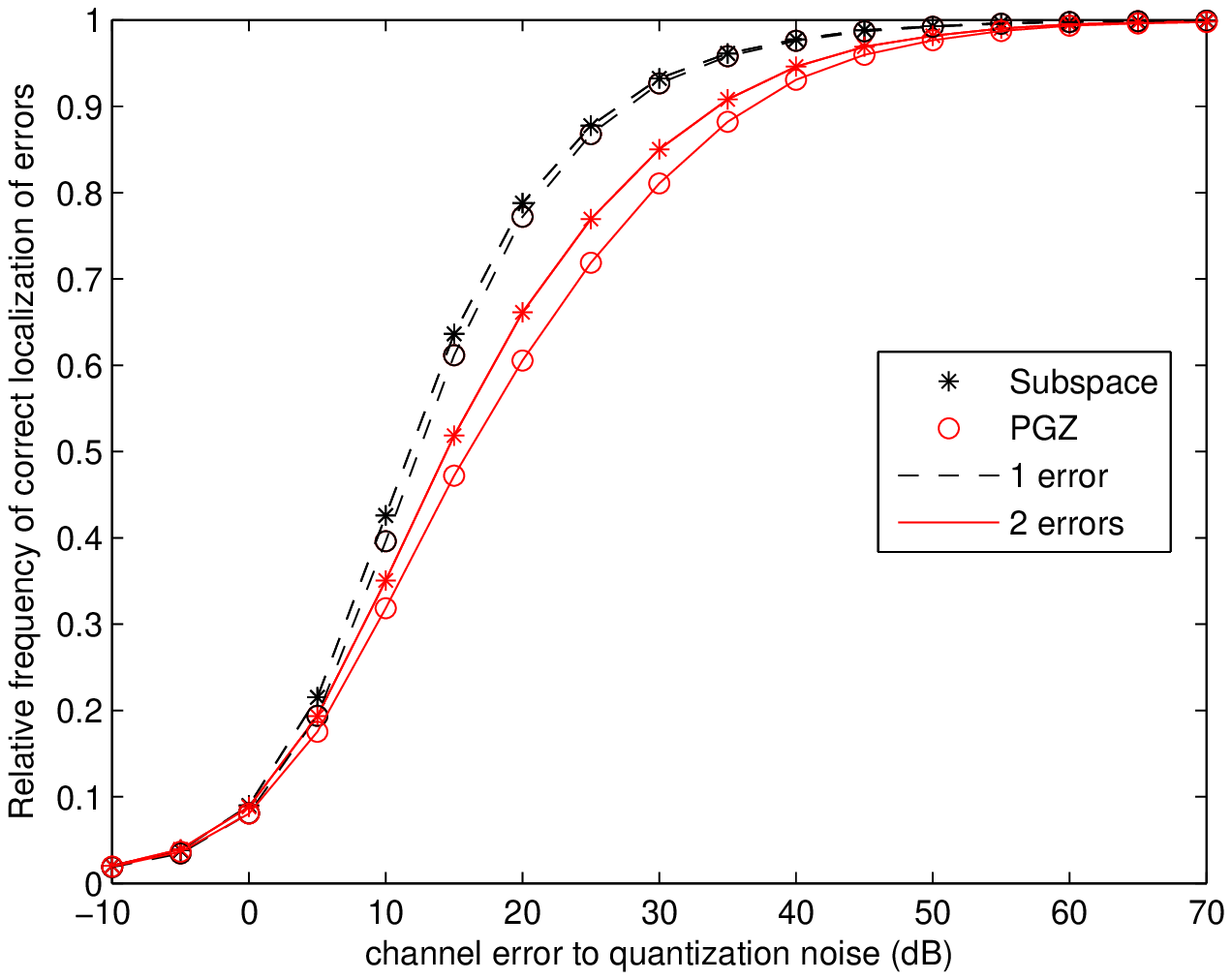}
\label{fig:subfig3}
}
\subfigure[]{
\includegraphics[scale=.53]{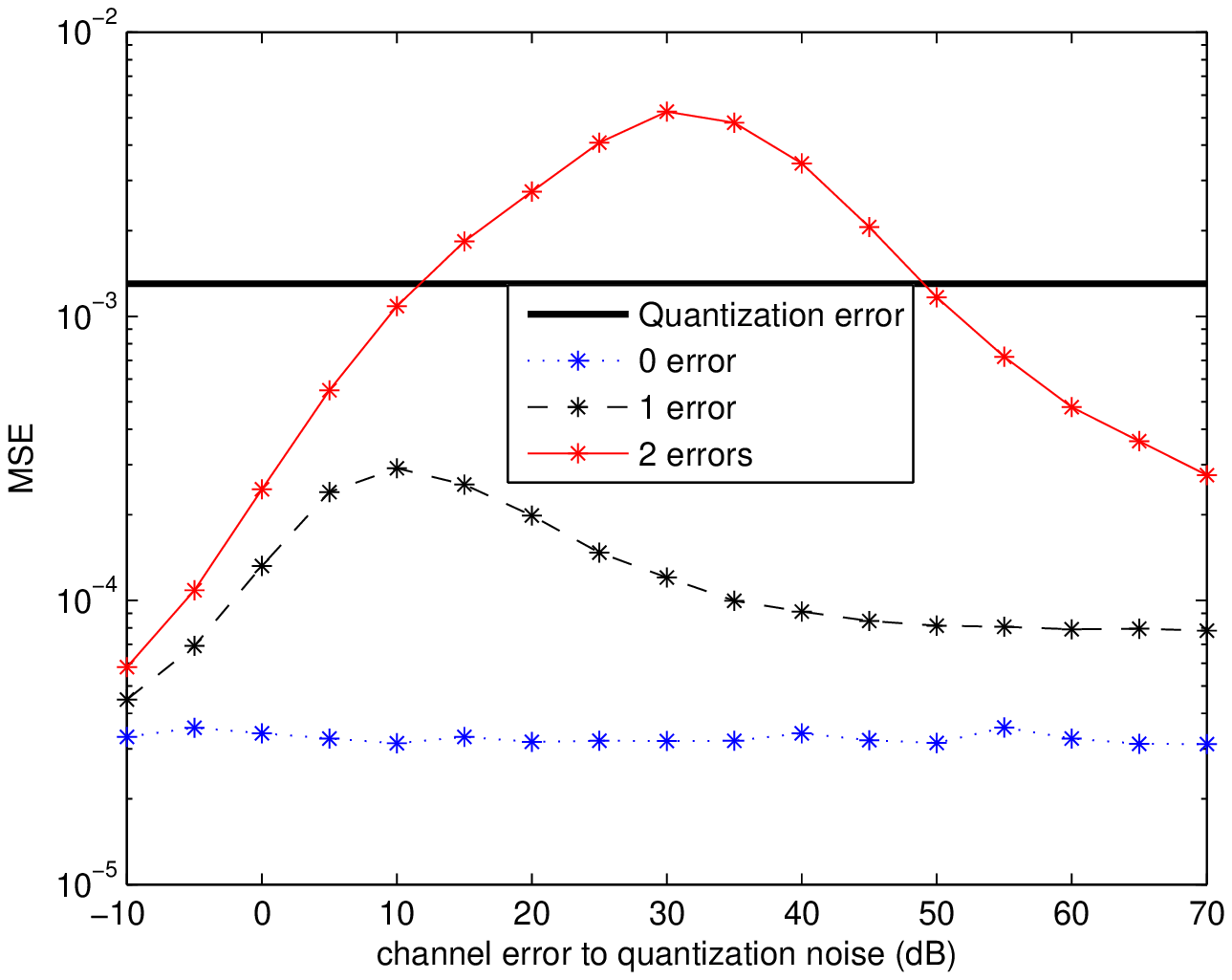}
\label{fig:subfig4}
}
\caption[Optional caption for list of figures]{Performance evaluation of joint source-channel
coding with side information at the decoder, proposed in Fig.~\ref{fig:DJSCC}, for $(10, 5)$ DFT code.
\subref{fig:subfig1} Histogram of $\lambda_{\mathrm{max}}(\tilde{\bm{R}})$ for a quantized code. This is
 used to set a threshold for detection.
\subref{fig:subfig2} Probability of correct detection of errors.
\subref{fig:subfig3} Probability of correct localization of errors.
\subref{fig:subfig4} The end-to-end distortion (MSE).}
\label{fig:simulations}
\end{figure*}

Next, we evaluate the performance of the
JSCC with side information at the decoder, illustrated in Fig.~\ref{fig:DJSCC}.
By using a systematic $(10, 5)$ DFT code, we generate, quantize, and
transmit parity samples over a noisy channel. Note that, for this code
the best systematic code \cite{vaezi2012frame} achieves the lower bound in
 \eqref{eq:parVar}; i.e., it results in $\sigma_p=\sigma_x$.
The correlation channel and transmission channel altogether insert up to $t$ errors
generated by $\mathcal N(0,\sigma_e^2)$.
Simulation results are plotted in Fig.~\ref{fig:simulations}.
First, based on Fig.~\ref{fig:subfig1}, the threshold $\theta=0.0065$ is fixed for $p_d=90\%$.
Next, this is used to estimate $\nu$ in Fig.~\ref{fig:subfig2}.
The estimated number is subsequently used to find the location of errors,
both for the PGZ and subspace methods, in Fig.~\ref{fig:subfig3}.\footnote{It is worth mentioning
that if the amplitude of errors is fixed, as assumed in \cite{rath2004subspace},
the results improve considerably for both methods. For one thing, at the CEQNR of 20dB
the probability of correct localization becomes 1.}
Then, the output of Fig.~\ref{fig:subfig3}, for the subspace method,
is fed to the last step to find the magnitude of errors and correct them.
Finally, in Fig.~\ref{fig:subfig4}, the MSE is compared against the quantization error level,
the ideal case in the lossy source coding based on
binary codes.
To put our results in perspective, we also calculate the MSE assuming perfect error localization;
it gives $0$, $6.5\times10^{-5}$, and $1.8\times10^{-4}$
 respectively for 0, 1, and 2 errors, in every CEQNR.
 This implies that there is still room to improve the MSE performance
 of the proposed system, given a better solution for the error localization.
 It also shows the performance gap between this DFT code and binary codes in the ideal case.

%
It should be noted that capacity-approaching channel codes
may introduce significant delay if one strives to
approach the capacity of the channel with very a low probability
of error. Therefore, those are out of the question
for delay-sensitive systems. In that case,
it would be best to use channel codes of low rate and focus on
achieving a very low probability of error. The system we introduced
is a low-delay system which works well with reasonably high-rate codes. 
Finally,  by puncturing some parity samples, rate-adaptive schemes 
are realized for the proposed 
DJSCC and parity-based DSC.
Rate-adaptive systems are popular in the transmission of non-ergodic data, like video \cite{varodayan2006rate}.

\section{Conclusions}\label{sec:con}
We have introduced a new framework for the distributed lossy source coding, in general,
and the Wyner-Ziv coding, in particular. The idea
is to do binning before quantizing the continuous-valued signal, as opposed to the conventional
approach where binning is done after quantization.
By doing binning in the real field, the virtual correlation channel
can be modeled more accurately and the
quantization error can be compensated for when there is no error.
In the new paradigm,
Wyner-Ziv coding is realized by cascading a Slepian-Wolf encoder
with a quantizer. We employ real BCH-DFT codes to do the Slepian-Wolf in the real field.
At the decoder, by introducing both syndrome-based and parity-based systems,
we adapt the PGZ decoding algorithm of channel coding to DSC.
The extension of the parity-based Wyner-Ziv coding to joint source-channel
coding with side information at the decoder is straightforward.
This scheme directly maps short source
blocks into channel blocks, and thus it is appropriate for low-delay coding.
From simulation results, we conclude that our
systems can improve the reconstruction error even using short codes,
so they can become viable in real-world scenarios where low-delay communication
is required.

We have also adapted the subspace error localization in this context
and improved it for the parity-based scheme. This reasonably improves the error localization and
leads to a better mean-squared reconstruction error.
We should point out that, a more accurate algorithm for error localization is a key to further improve the reconstruction error.
Particularly, error localization becomes more challenging when codeword length increases,
as the roots of error locator polynomial, which are over the unit circle, get closer.

\section{Appendix}

\subsection{Proof of Lemma~\ref{lem1}}
\label{sec:app1}
\begin{proof}
The proof is straightforward; we show this for odd $k$ and leave the other case to the reader.
Since $\alpha = \lceil \frac{k+1}{2}\rceil$ and $d=n-k$, using \eqref{eq:syndsamp}, for odd $k$
we can write
 \begin{align*}
s_{d-m+1}&=\frac{1}{\sqrt{n}}\sum_{p=1}^{\nu} e_{i_p}X_p^{\frac{k-1}{2}+n-k-m+1} \\
&=\frac{1}{\sqrt{n}}\sum_{p=1}^{\nu} e_{i_p}X_p^{\frac{-k+1}{2}-m} =s_m^{\ast}.
\label{eq:syndEq}
\end{align*}
Note that $X_p^n=1$, for any $p$.
\end{proof}


\begin{thebibliography}{10}

\bibitem{SW}
D.~Slepian and J.~K. Wolf, ``Noiseless coding of correlated information
  sources,'' {\em IEEE Transactions on Information Theory}, vol.~IT-19,
  pp.~471--480, July 1973.

\bibitem{WZ}
A.~D. Wyner and J.~Ziv, ``The rate-distortion function for source coding with
  side information at the decoder,'' {\em IEEE Transactions on Information
  Theory}, vol.~22, pp.~1--10, Jan. 1976.

\bibitem{Girod}
B.~Girod, A.~M. Aaron, S.~Rane, and D.~Rebollo-Monedero, ``{Distributed video
  coding},'' {\em Proceedings of the IEEE}, vol.~93, pp.~71--83, Jan. 2005.

\bibitem{Xiong}
Z.~Xiong, A.~D. Liveris, and S.~Cheng, ``{Distributed source coding for sensor
  networks},'' {\em IEEE Signal Processing Magazine}, vol.~21, pp.~80--94,
  Sept. 2004.

\bibitem{Pradhan2}
S.~S. Pradhan and K.~Ramchandran, ``{Distributed source coding using syndromes
  (DISCUS): Design and construction},'' {\em IEEE Transactions on Information
  Theory}, vol.~49, pp.~626--643, March 2003.

\bibitem{dragotti2009distributed}
P.~L. Dragotti and M.~Gastpar, {\em {Distributed Source Coding: Theory,
  Algorithms, and Applications}}.
\newblock Academic Press, 2009.

\bibitem{liveris2002compression}
A.~D. Liveris, Z.~Xiong, and C.~N. Georghiades, ``{Compression of binary
  sources with side information at the decoder using LDPC codes},'' {\em IEEE
  Communications Letters}, vol.~6, pp.~440--442, Oct. 2002.

\bibitem{bajcsy2001coding}
J.~Bajcsy and P.~Mitran, ``{Coding for the Slepian-Wolf problem with turbo
  codes},'' in {\em Proc. IEEE Global Telecommunications Conference
  (GLOBECOM)}, vol.~2, pp.~1400--1404, 2001.

\bibitem{aaron2002compression}
A.~Aaron and B.~Girod, ``Compression with side information using turbo codes,''
  in {\em Proc. IEEE Data Compression Conference}, pp.~252--261, 2002.

\bibitem{Marshall}
T.~Marshall~Jr., ``Coding of real-number sequences for error correction: A
  digital signal processing problem,'' {\em IEEE Journal on Selected Areas in
  Communications}, vol.~2, pp.~381--392, March 1984.

\bibitem{wolf1983redundancy}
J.~K. Wolf, ``{Redundancy, the discrete Fourier transform, and impulse noise
  cancellation},'' {\em IEEE Transactions on Communications}, vol.~31,
  pp.~458--461, March 1983.

\bibitem{nair1990real}
V.~S.~S. Nair and J.~A. Abraham, ``Real-number codes for fault-tolerant matrix
  operations on processor arrays,'' {\em IEEE Transactions on Computers},
  vol.~39, pp.~426--435, April 1990.

\bibitem{marvasti1999efficient}
F.~Marvasti, M.~Hasan, M.~Echhart, and S.~Talebi, ``{Efficient algorithms for
  burst error recovery using FFT and other transform kernels},'' {\em IEEE
  Transactions on Signal Processing}, vol.~47, pp.~1065--1075, April 1999.

\bibitem{chen2005numerically}
Z.~Chen and J.~Dongarra, ``Numerically stable real number codes based on random
  matrices,'' {\em Computational Science--ICCS 2005}, pp.~115--122, 2005.

\bibitem{blahut2003algebraic}
R.~E. Blahut, {\em {Algebraic Codes for Data Transmission}}.
\newblock New York: Cambridge Univ. Press, 2003.

\bibitem{rath2004subspace}
G.~Rath and C.~Guillemot, ``{Subspace algorithms for error localization with
  quantized DFT codes},'' {\em IEEE Transactions on Communications}, vol.~52,
  pp.~2115--2124, Dec. 2004.

\bibitem{gabay2007joint}
A.~Gabay, M.~Kieffer, and P.~Duhamel, ``{Joint source-channel coding using real
  BCH codes for robust image transmission},'' {\em IEEE Transactions on Image
  Processing}, vol.~16, pp.~1568--1583, June 2007.

\bibitem{Takos}
G.~Takos and C.~N. Hadjicostis, ``{Determination of the number of errors in DFT
  codes subject to low-level quantization noise},'' {\em IEEE Transactions on
  Signal Processing}, vol.~56, pp.~1043--1054, March 2008.

\bibitem{Vaezi2011LS}
M.~Vaezi and F.~Labeau, ``{Least squares solution for error correction on the
  real field using quantized DFT codes},'' in {\em Proc. the 20th European
  Signal Processing Conference (EUSIPCO)}, pp.~2561--2565, 2012.

\bibitem{goyal2001quantized}
V.~K. Goyal, J.~Kova{\v{c}}evi{\'c}, and J.~A. Kelner, ``Quantized frame
  expansions with erasures,'' {\em Applied and Computational Harmonic
  Analysis}, vol.~10, no.~3, pp.~203--233, 2001.

\bibitem{rath2004frame}
G.~Rath and C.~Guillemot, ``{Frame-theoretic analysis of DFT codes with
  erasures},'' {\em IEEE Transactions on Signal Processing}, vol.~52,
  pp.~447--460, Feb. 2004.

\bibitem{casazza2003equal}
P.~Casazza and J.~Kova{\v{c}}evi{\'c}, ``Equal-norm tight frames with
  erasures,'' {\em Advances in Computational Mathematics}, vol.~18, no.~2,
  pp.~387--430, 2003.

\bibitem{akyol2010optimal}
E.~Akyol, K.~Rose, and T.~Ramstad, ``Optimal mappings for joint source channel
  coding,'' in {\em IEEE Information Theory Workshop (ITW)}, pp.~1--5, 2010.

\bibitem{wernersson2009nonlinear}
N.~Wernersson and M.~Skoglund, ``Nonlinear coding and estimation for correlated
  data in wireless sensor networks,'' {\em IEEE Transactions on
  Communications}, vol.~57, pp.~2932--2939, October 2009.

\bibitem{Vaezi2012model}
M.~Vaezi and F.~Labeau, ``{Improved modeling of the correlation between
  continuous-valued sources in LDPC-based DSC},'' in {\em Proc. the 46th
  Asilomar Conference on Signals, Systems and Computers}, 2012.

\bibitem{bassi2008source}
F.~Bassi, M.~Kieffer, and C.~Weidmann, ``Source coding with intermittent and
  degraded side information at the decoder,'' in {\em Proc. IEEE International
  Conference on Acoustics, Speech and Signal Processing (ICASSP)},
  pp.~2941--2944, 2008.

\bibitem{vaezi2012frame}
M.~Vaezi and F.~Labeau, ``{Systematic DFT frames: Principle and eigenvalues
  structure},'' in {\em Proc. International Symposium on Information Theory
  (ISIT)}, pp.~2436--2440, 2012.

\bibitem{schmidt1986multiple}
R.~O. Schmidt, ``Multiple emitter location and signal parameter estimation,''
  {\em IEEE Transactions on Antennas and Propagation}, vol.~34, no.~3,
  pp.~276--280, 1986.

\bibitem{kovacevic2008introduction}
J.~Kova{\v{c}}evi{\'c} and A.~Chebira, {\em An introduction to frames}.
\newblock Now Publishers, 2008.

\bibitem{Vaezi2011DSC}
M.~Vaezi and F.~Labeau, ``Distributed lossy source coding using real-number
  codes,'' in {\em Proc. the 76th IEEE Vehicular Technology Conference, VTC
  Fall}, pp.~1--5, 2012.

\bibitem{vaezi2012frameJ}
M.~Vaezi and F.~Labeau, ``{Systematic DFT frames: Principle, eigenvalues
  structure, and applications},''
\newblock submitted to \textit{IEEE Transactions on Signal Processing}.
  [Online]. Available: http://arxiv.org/pdf/1207.6146.pdf.

\bibitem{kovacevic2007life}
J.~Kova{\v{c}}evi{\'c} and A.~Chebira, ``{Life beyond bases: The advent of
  frames (Part I)},'' {\em IEEE Signal Processing Magazine}, vol.~24,
  pp.~86--104, July 2007.

\bibitem{Cover}
T.~M. Cover and J.~A. Thomas, {\em {Elements of Information Theory}}.
\newblock New York: John Wiley \& Sons, 2006.

\bibitem{xu2007distributed}
Q.~Xu, V.~Stankovic, and Z.~Xiong, ``Distributed joint source-channel coding of
  video using raptor codes,'' {\em IEEE Journal on Selected Areas in
  Communications}, vol.~25, pp.~851--861, May 2007.

\bibitem{liveris2002joint}
A.~D. Liveris, Z.~Xiong, and C.~N. Georghiades, ``{Joint source-channel coding
  of binary sources with side information at the decoder using IRA codes},'' in
  {\em Proc. IEEE Workshop on Multimedia Signal Processing}, pp.~53--56, 2002.

\bibitem{garcia2005distributed}
J.~Garcia-Frias and Z.~Xiong, ``Distributed source and joint source-channel
  coding: From theory to practice,'' in {\em Proc. IEEE International
  Conference on Acoustic, Speech and Signal Processing (ICASSP)},
  pp.~1093--1096, 2005.

\bibitem{varodayan2006rate}
D.~Varodayan, A.~Aaron, and B.~Girod, ``Rate-adaptive codes for distributed
  source coding,'' {\em Signal Processing}, vol.~86, pp.~3123--3130, Nov. 2006.

\end{thebibliography}

\end{document}